\long\def\ignore#1{\vskip 0pt}
\newcommand{\SA}{\ensuremath{\mathsf{SA}}\xspace}
\newcommand{\C}{\ensuremath{\mathsf{C}}\xspace}
\newcommand{\LA}{\ensuremath{\mathsf{LA}}\xspace}
\newcommand{\ISA}{\ensuremath{\mathsf{ISA}}\xspace}
\newcommand{\A}{\ensuremath{\mathsf{A}}\xspace}
\newcommand{\NSVa}[1]{\ensuremath{\mathsf{NSV_{#1}}}\xspace}
\newcommand{\NSV}{\ensuremath{\mathsf{NSV}}\xspace}
\newcommand{\Next}{\ensuremath{\mathsf{NEXT}}\xspace}
\newcommand{\Prev}{\ensuremath{\mathsf{PREV}}\xspace}
\renewcommand{\A}{\ensuremath{\mathsf{A}}\xspace}
\newcommand{\etal}{{\it et al.}\xspace}
\newcommand{\avelyn}{\mathsf{avelyn}}
\newcommand{\tbf}{\textbf}
\newcommand{\sacak}{\mbox{\textsc{SACA-K}}\xspace}
\newcommand{\nsv}{\mbox{\textsc{NSV-Lyndon}}\xspace}
\newcommand{\baier}{\mbox{\textsc{Baier-LA}}\xspace}
\newcommand{\baiersa}{\mbox{\textsc{Baier-LA+SA}}\xspace}
\newcommand{\louza}{\mbox{\textsc{BWT-Lyndon}}\xspace}
\newcommand{\ourA}{\mbox{\textsc{SACA-K+LA}}\xspace}
\newcommand{\ourB}{\mbox{\textsc{SACA-K+LA}-17n}\xspace}
\newcommand{\ourC}{\mbox{\textsc{SACA-K+LA}-13n}\xspace}
\newcommand{\ourD}{\mbox{\textsc{SACA-K+LA}-9n}\xspace}
\begin{document}
%
%\title{Computing the Lyndon Array during Suffix~Sorting}
\title{Inducing the Lyndon Array}
%\thanks{Supported by organization x.}
%
%\titlerunning{Abbreviated paper title}
% If the paper title is too long for the running head, you can set
% an abbreviated paper title here
%
\author{
Felipe A. Louza\inst{1} %\orcidID{0000-0003-2931-1470} 
\and
Sabrina~Mantaci\inst{2} %\orcidID{0000-0002-9200-0520}
\and
Giovanni~Manzini\inst{3,4} %\orcidID{0000-0002-5047-0196}
\and\\
Marinella~Sciortino\inst{2} %\orcidID{0000-0001-6928-0168}
\and
Guilherme~P.~Telles\inst{5} %\orcidID{0000-0003-2608-4807}
}

\authorrunning{F. A. Louza et al.}
% First names are abbreviated in the running head.
% If there are more than two authors, 'et al.' is used.
%
\institute{
Department of Computing and Mathematics, University of S\~ao Paulo, Brazil\\
%{Ribeirão Preto, Brazil}\\
\email{louza@usp.br}\\
\and
Dipartimento di Matematica e Informatica, University of Palermo, {Italy}\\
\email{\{sabrina.mantaci,marinella.sciortino\}@unipa.it}\\
\and
University of Eastern Piedmont, {Alessandria, Italy} \\
\and
IIT CNR, Pisa, Italy\\
\email{giovanni.manzini@uniupo.it}\\
\and
Institute of Computing, University of Campinas, Brazil\\
\email{gpt@ic.unicamp.br}
}
\maketitle              % typeset the header of the contribution
\begin{abstract}

In this paper we propose a variant of the {induced suffix sorting} algorithm by Nong (TOIS, 2013) that computes simultaneously the Lyndon array and the suffix array of a text in {$O(n)$} time using {$\sigma + O(1)$ words} of working space, where $n$ is the length of the text and $\sigma$ is the alphabet size.
Our result improves the previous best space requirement for linear time computation of the Lyndon array. In fact, all the known linear algorithms for Lyndon array computation use suffix sorting {as a preprocessing step} and use {$O(n)$ words of working space in addition to the Lyndon array and suffix array}. 
Experimental results with real and synthetic datasets show that our algorithm is not only space-efficient but also fast in practice.

\keywords{Lyndon array, Suffix array, induced suffix sorting, lightweight algorithms}

\end{abstract}

%%%%%%%%%%%%%%%%%%%%%%%%%%%%%%%%%%%%%%%%%%%%%%%%%%%%%%%%%%%%%%%%%%%%%%%%%%%%%%%%
\section{Introduction}\label{s:intro}

The suffix array is a central data structure for string processing.
Induced suffix sorting is a remarkably powerful technique for the construction of the suffix array. Induced sorting was introduced by Itoh and Tanaka~\cite{it99} and later refined by Ko and Aluru~\cite{KoAlu03} and by Nong \etal~\cite{NongZC09,Nong2011}. 
In 2013, Nong~\cite{tois/Nong13} proposed a space efficient linear time algorithm based on induced sorting, called \sacak, which uses only $\sigma + O(1)$ words of working space, where $\sigma$ is the alphabet size and the working space is the space used in addition to the input and the output. 
Since a small working space is a very desirable feature, there have been many algorithms adapting induced suffix sorting to the computation of data structures related to the suffix array, such as the Burrows-Wheeler transform~\cite{Okanohara2009}, the $\Phi$-array~\cite{Goto2014}, the LCP array~\cite{Fischer2011,ipl/LouzaGT17}, and the document array~\cite{tcs/LouzaGT17}.

The Lyndon array of a string is a powerful tool that generalizes the idea of Lyndon factorization.
In the Lyndon array ($\LA$) of string $T=T[1]\ldots T[n]$ over the alphabet $\Sigma$, each entry $\LA[i]$, with $1\leq i\leq n$, stores the length of the longest Lyndon factor of $T$ starting at that position $i$.
Bannai \etal~\cite{siamcomp/BannaiIINTT17} used Lyndon arrays to prove the conjecture by Kolpakov and Kucherov~\cite{focs/KolpakovK99} that the number of runs (maximal periodicities) in a string of length $n$ is smaller than $n$. In~\cite{tcs/CrochemoreR2018} the authors have shown that the computation of the Lyndon array of $T$ is strictly related to the construction of the Lyndon tree~\cite{Hohlweg2003} of the string $\$T$ (where the symbol $\$$ is smaller than any symbol of the alphabet $\Sigma$).

In this paper we address the problem of designing a space economical linear time algorithm for the computation of the Lyndon array.  As described in~\cite{Franek2016,jda/LouzaSMT18}, there are several algorithms to compute the Lyndon array. It is noteworthy that the ones that run in linear time (cf.~\cite{Baier2016,tcs/CrochemoreR2018,Franek2016,FranekPS17,jda/LouzaSMT18}) use the sorting of the suffixes (or a partial sorting of suffixes) of the input string as a preprocessing step. 
Among the linear time algorithms, the most space economical {is the one in~\cite{Franek2016}} which, in addition to the $n \log \sigma$ bits for the input string plus $2n$ words for the Lyndon array and suffix array, uses a stack whose size depends on the structure of the input. Such stack is relatively small for non pathological texts, but in the worst case its size can be up to $n$ words. 
Therefore, the overall space in the worst case can be up to $n \log \sigma$ bits plus $3n$ words.

{In this paper we propose a variant of the algorithm \sacak that computes in linear time the Lyndon array as a by-product of suffix array construction. Our algorithm uses overall $n \log \sigma$ bits plus $2n+\sigma + O(1)$ words of space. 
This bound makes our algorithm the one with the best worst case space bound among the linear time algorithms. Note that the $\sigma + O(1)$ words of working space of our algorithm is optimal for strings from alphabets of constant size. 
Our experiments show that our algorithm is competitive in practice compared to the other linear time solutions to compute the Lyndon array.}

%%%%%%%%%%%%%%%%%%%%%%%%%%%%%%%%%%%%%%%%%%%%%%%%%%%%%%%%%%%%%%%%%%%%%%%%%%%%%%%%
\section{Background}\label{s:background}

Let $T=T[1]\dots T[n]$ be a string of length $n$ over a fixed ordered alphabet $\Sigma$ of size $\sigma$, where $T[i]$ denotes the $i$-th symbol of $T$.
We denote $T[i,j]$ as the factor of $T$ starting from the $i$-th symbol and ending at the $j$-th symbol. A suffix of $T$ is a factor of the form $T[i,n]$ and is also denoted as $T_i$. In the following we assume that any integer array of length $n$ with values in the range $[1,n]$ takes {$n$ words ($n \log n$ bits)} of space.

Given $T=T[1]\dots T[n]$, the {\em $i$-th rotation} of $T$ begins with $T[i+1]$, corresponding to the string $T'=T[i+1]\dots T[n]T[1]\dots T[i]$.
Note that, a string of length $n$ has $n$ possible rotations. A string $T$ is a {\em repetition} if there exists a string $S$ and an integer $k>1$ such that $T=S^k$, otherwise it is called  {\em primitive}. If a string is primitive, all of its rotations are different.

A primitive string $T$ is called a {\em Lyndon word} if it is the lexicographical least among its rotations. 
For instance, the string $T=abanba$ is not a Lyndon word, while it is its rotation $aabanb$ is. A \emph{Lyndon factor} of a string $T$ is a factor of $T$ that is a Lyndon word. For instance, $anb$ is a Lyndon factor of $T=abanba$.

\begin{definition}
Given a string $T=T[1]\dots T[n]$, the Lyndon array (LA) of $T$ is an array of integers in the range $[1,n]$ that, at each position $i=1,\dots,n$, stores the length of the longest Lyndon factor of $T$ starting at $i$:
$$
\LA[i] = \max\{\ell~|~T[i,i+\ell-1] \mbox{ is a Lyndon word}\}.
$$
\end{definition}

The suffix array (\SA)~\cite{MM93} of a string $T=T[1]\dots T[n]$ is an array of integers in the range $[1,n]$ that gives the lexicographic order of all suffixes of $T$, that is $T_{\SA[1]}<T_{\SA[2]}<\dots<T_{\SA[n]}$.
The inverse suffix array (\ISA) stores the inverse permutation of \SA, such that $\ISA[\SA[i]]=i$.
The suffix array can be computed in $O(n)$ time using $\sigma + O(1)$ words of working space~\cite{tois/Nong13}. 

Usually when dealing with suffix arrays it is convenient to append to the string $T$ a special end-marker symbol $\$$ (called sentinel) that does not occur elsewhere in {$T$ and $\$$ is smaller than any other symbol in $\Sigma$.}
Here we assume that $T[n]=\$$. Note that the values $\LA[i]$, for $1\leq i\leq n-1$ do not change when the symbol $\$$ is appended at the position $n$. 
{Also, string $T=T[1]\dots T[n-1]\$$ is always primitive.}

Given an array of integers $\A$ of size $n$, the next smaller value (\NSV) array of $\A$, denoted $\NSVa{A}$, is an array of size $n$ such that $\NSVa{A}[i]$ contains the smallest position $j>i$ such that $\A[j]<\A[i]$, or $n+1$ if such a position $j$ does not exist. Formally:
$$ 
\NSVa{A}[i]=\min\bigl\{\{n+1\}\cup\{i<j\leq n \mid \A[j]<\A[i]\}\bigr\}.
$$
As an example, in Figure~\ref{f:la} we consider the string $T=banaananaanana\$$, and its Suffix Array (\SA), Inverse Suffix Array (\ISA), Next Smaller Value array of the \ISA (\NSVa{\ISA}), and Lyndon Array (\LA). 
We also show all the Lyndon factors starting at each position of $T$. 

\begin{figure}[t]
    \centering
    \includegraphics[width=.75\textwidth]{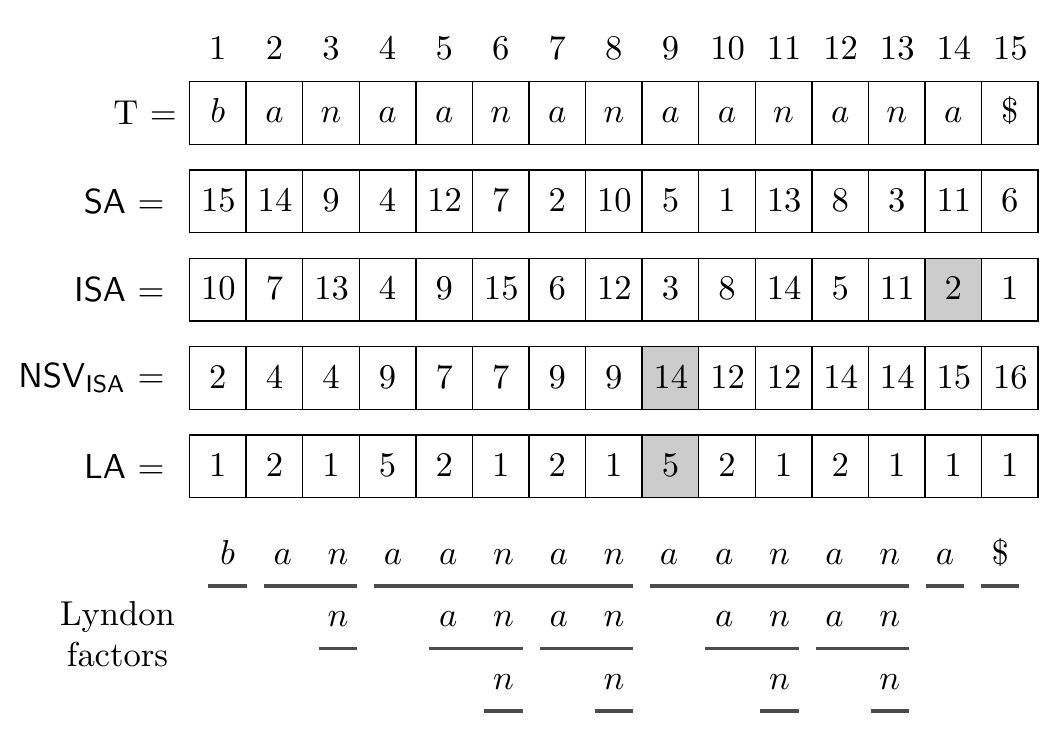}
    \caption{\SA, \ISA, \NSVa{\ISA}, \LA and all Lyndon factors for $T=banaananaanana\$$}
    \label{f:la}
\end{figure}

If the $\SA$ of $T$ is known, the Lyndon array $\LA$ can be computed in linear time thanks to the following lemma that rephrases a result in \cite{Hohlweg2003}:

\begin{lemma}\label{l:sa_lyndon}
The factor $T[i, i+ \ell-1]$ is the longest Lyndon factor of $T$ starting at $i$ iff $T_{i}<T_{i+k}$, for $1\leq k<\ell$, and $T_{i}>T_{i+\ell}$.
Therefore, $\LA[i]=\ell$.\qed
\end{lemma}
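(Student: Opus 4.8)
The plan is to prove Lemma~\ref{l:sa_lyndon} by characterizing longest Lyndon factors in terms of suffix comparisons, using the classical facts about Lyndon words: (i) a string $w$ is a Lyndon word iff $w$ is strictly smaller than every proper suffix of $w$; and (ii) if $u$ and $v$ are Lyndon words with $u < v$, then $uv$ is a Lyndon word. I would phrase everything in terms of the suffixes $T_i = T[i,n]$, exploiting the fact that $T[n]=\$$ is a unique minimal sentinel, so that comparing suffixes $T_i$ and $T_j$ is essentially the same as comparing the factors that precede the sentinel; in particular no suffix is a prefix of another, so all suffix comparisons are strict.

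First I would establish the ``if'' direction. Assume $T_i < T_{i+k}$ for all $1\le k<\ell$ and $T_i > T_{i+\ell}$. To see that $w := T[i,i+\ell-1]$ is a Lyndon word, consider any proper suffix $T[i+k,i+\ell-1]$ of $w$ with $1\le k<\ell$; I must show $w < T[i+k,i+\ell-1]$. From $T_i < T_{i+k}$ and the fact that these suffixes share no prefix-containment relation, the first position where $T_i$ and $T_{i+k}$ differ occurs at some offset; since $w$ is a prefix of $T_i$ of length $\ell - $ wait, I need this mismatch to fall within the first $\ell-k$ symbols. Here is the clean way: because $T_i < T_{i+1} < \dots$ is not assumed, but rather $T_i < T_{i+k}$ for each such $k$ individually, a short argument shows the mismatch between $T_i$ and $T_{i+k}$ must occur within the first $\ell-k$ characters — otherwise $T[i+k, i+\ell-1]$ would be a prefix of $w$, forcing $w$ to have a period, and combined with $T_i > T_{i+\ell}$ one derives a contradiction with minimality. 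This gives $w < T[i+k,i+\ell-1]$ for all $k$, so $w$ is Lyndon, hence $\LA[i]\ge\ell$. For maximality, suppose $T[i,i+\ell'-1]$ were Lyndon for some $\ell'>\ell$; then its proper suffix $T[i+\ell,i+\ell'-1]$ satisfies $T[i,i+\ell'-1] < T[i+\ell,i+\ell'-1]$, which (again using the sentinel to make comparisons strict) forces $T_i < T_{i+\ell}$, contradicting the hypothesis $T_i > T_{i+\ell}$.

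Next I would establish the ``only if'' direction. Assume $w = T[i,i+\ell-1]$ is the longest Lyndon factor starting at $i$. Since $w$ is Lyndon, for each $1\le k<\ell$ the proper suffix $T[i+k,i+\ell-1]$ satisfies $w < T[i+k,i+\ell-1]$; appending $T_{i+\ell}$ to both sides (they share the common context within $T$) and using that the mismatch witnessing $w < T[i+k,i+\ell-1]$ lies strictly inside the length-$(\ell-k)$ prefix, I get $T_i < T_{i+k}$. It remains to show $T_i > T_{i+\ell}$. Suppose instead $T_i < T_{i+\ell}$ (equality is impossible by the sentinel). Let $\ell''$ be the length of the longest Lyndon factor starting at $i+\ell$, so $v := T[i+\ell, i+\ell+\ell''-1]$ is Lyndon. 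I claim $w < v$: from $T_i < T_{i+\ell}$ the mismatch position either falls within $w$ — giving $w < T[i+\ell,\dots]$ and hence $w<v$ after a standard prefix-comparison argument — or $w$ is a prefix of $T_{i+\ell}$'s corresponding factor, in which case one shows $w$ itself cannot be Lyndon (it would be bordered by a repeat) or $wv$ extends it. Then by fact (ii), $wv = T[i, i+\ell+\ell''-1]$ is a Lyndon word longer than $w$, contradicting maximality of $w$. Hence $T_i > T_{i+\ell}$.

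\textbf{The main obstacle} I anticipate is the bookkeeping around ``where does the mismatch fall'': transferring an inequality between two suffixes $T_i, T_{i+k}$ to an inequality between the finite factors $T[i,i+\ell-1]$ and $T[i+k,i+\ell-1]$, and vice versa, requires carefully ruling out the degenerate case where one factor is a prefix of the other. The role of the sentinel $T[n]=\$$ is essential here — it guarantees all suffix comparisons are strict and that no suffix is a prefix of another — but one still has to verify that the ``overflow'' portion beyond position $i+\ell-1$ never changes the direction of the comparison, which is exactly where the hypotheses $T_i < T_{i+k}$ (for the in-range part) and $T_i > T_{i+\ell}$ (controlling the boundary) do their work. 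Once this prefix/overflow lemma is stated cleanly, both directions follow from the two classical Lyndon-word facts without further difficulty.
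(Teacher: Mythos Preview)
The paper does not actually prove this lemma: it is stated as a rephrasing of a known result of Hohlweg and Reutenauer and is closed with \qed\ without argument. So there is no ``paper's proof'' to compare against; I will simply assess your proposal on its own merits.

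Your plan is sound in outline and the two classical facts you invoke are the right tools. The ``if'' direction is essentially correct once the hand-waved mismatch-location step is written out properly: if the first $\ell-k$ symbols of $T_i$ and $T_{i+k}$ coincide, then stripping this common prefix turns $T_i<T_{i+k}$ into $T_{i+(\ell-k)}<T_{i+\ell}$; combined with the hypothesis $T_i<T_{i+(\ell-k)}$ (valid since $1\le \ell-k<\ell$) this yields $T_i<T_{i+\ell}$, contradicting $T_i>T_{i+\ell}$. So the mismatch must lie in the first $\ell-k$ positions, as you need. Your phrasing in terms of ``$w$ has a period'' and ``minimality'' is not quite the right intuition, but the argument is there.

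The genuine gap is in the ``only if'' direction, in your claim that $w<v$. If the mismatch between $T_i$ and $T_{i+\ell}$ falls at a position $p$ with $\ell''<p\le\ell$ (where $\ell''=|v|$), then the first $\ell''$ symbols of $T_i$ and $T_{i+\ell}$ agree, so $v$ is a \emph{proper prefix} of $w$, whence $v<w$, not $w<v$, and fact~(ii) does not apply. Your parenthetical escape hatches (``$w$ would be bordered'' / ``$wv$ extends it'') do not cover this case: nothing forces $w$ to acquire a border, and if $v$ is a prefix of $w$ then $wv$ need not be Lyndon (take $w=ab$, $v=a$: then $wv=aba$ is bordered). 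One clean repair is to run an induction on $n-i$ and use the lemma at position $i+\ell$; another, more standard, repair is to invoke the Lyndon factorisation directly: the longest Lyndon prefix of $T_i$ is its first Lyndon factor $w_1=w$, and it is a classical (but not entirely trivial) fact that the suffix $w_2\cdots w_m$ is strictly smaller than $w_1w_2\cdots w_m$, which gives $T_{i+\ell}<T_i$ immediately. Either way, the step you flagged as ``the main obstacle'' is exactly where the real work lies, and in this direction you have underestimated it.
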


Lemma~\ref{l:sa_lyndon} can be reformulated in terms of the inverse suffix array~\cite{Franek2016}, such that
$\LA[i]=\ell$ iff $\ISA[i]<\ISA[i+k]$, for $1\leq k <\ell$, and $\ISA[i]>\ISA[i+\ell]$. In other words, $i+\ell = \NSV_{\ISA}[i]$.
Since given {\ISA} we can compute $\NSVa{\ISA}$ in linear time using an auxiliary stack~\cite{Goto2013,Ohlebusch2013} of size $O(n)$ words, we can then derive \LA, {in the same space of $\NSVa{\ISA}$}, in linear time using the formula:
\begin{equation}\label{e:nsv_lyndon}
    \LA[i] = \NSV_{\ISA}[i]-i\mbox{, for }1 \leq i \leq n.
\end{equation}
Overall, this approach uses $n \log \sigma$ bits for $T$ plus $2n$ words for \LA and \ISA, and the space for the auxiliary stack.

Alternatively, \LA can be computed in linear time from the Cartesian tree~\cite{cacm/Vuillemin80} built for \ISA~\cite{tcs/CrochemoreR2018}.
Recently, Franek \etal~\cite{FranekPS17} {observed} that \LA can be computed in linear time during the suffix array construction algorithm by Baier~\cite{Baier2016} using overall $n \log \sigma$ bits plus $2n$ words for \LA and \SA plus $2n$ words for auxiliary integer arrays.
Finally, Louza \etal~\cite{jda/LouzaSMT18} introduced an algorithm that computes \LA in linear time during the Burrows-Wheeler inversion, using 
$n \log \sigma$ bits for $T$ plus $2n$ words for \LA and an auxiliary integer array, plus a stack with twice the size as the one used to compute $\NSVa{\ISA}$ (see Section~\ref{s:experiments}).

Summing up, the most economical linear time solution for computing the Lyndon array is the one based on~\eqref{e:nsv_lyndon} that requires, in addition to $T$ and \LA, $n$ words of working space plus an auxiliary stack. The stack size is small for non pathological inputs but can use $n$ words in the worst case (see also Section~\ref{s:experiments}).
Therefore, considering only \LA as output, the working space is $2n$ words in the worst case.

\subsection{Induced Suffix Sorting}\label{s:sacak}

The algorithm \sacak~\cite{tois/Nong13} uses a technique called induced suffix sorting to compute \SA in linear time using only $\sigma + O(1)$ words of working space. In this technique each suffix $T_i$ of $T[1,n]$ is classified according to its lexicographical rank relative to $T_{i+1}$.

\begin{definition}
A suffix $T_i$ is S-type if $T_i<T_{i+1}$, otherwise $T_i$ is L-type.
We define $T_n$ as S-type.
A suffix $T_i$ is LMS-type (leftmost S-type) if $T_i$ is S-type and $T_{i-1}$ is L-type.
\end{definition}

The type of each suffix can be computed with a right-to-left  scanning of $T$~\cite{Nong2011}, or otherwise it can be computed on-the-fly in constant time during Nong's algorithm~\cite[Section 3]{tois/Nong13}. 
By extension, the type of each symbol in $T$ can be classified according to the type of the suffix starting with such symbol. In particular $T[i]$ is LMS-type if and only if $T_i$ is LMS-type.

\begin{definition}
An LMS-factor of $T$ is a factor that begins with a LMS-type symbol and ends with the following LMS-type symbol. 
\end{definition}

We remark that LMS-factors do not establish a factorization of $T$ since each of them overlaps with the following one by one symbol. 
By convention, $T[n,n]$ is always an LMS-factor.  
The LMS-factors of $T=banaananaanana\$$ are shown in Figure~\ref{f:sacak}, where the type of each symbol is also reported. The LMS types are the grey entries. 
Notice that in \SA all suffixes starting with the same symbol $c\in \Sigma$ can be partitioned into a $c$-bucket.
We will keep an integer array $\C[1,\sigma]$ where $\C[c]$ gives either the first (head) or last (tail) available position of the $c$-bucket.
Then, whenever we insert a value into the head (or tail) of a $c$-bucket, we increase (or decrease) $\C[c]$ by one.
An important remark is that within each $c$-bucket S-type suffixes are larger than L-type suffixes.
Figure~\ref{f:sacak} shows a running example of algorithm \sacak for $T=banaananaanana\$$.

\begin{figure}[t!]
    \centering
    \includegraphics[width=.95\textwidth]{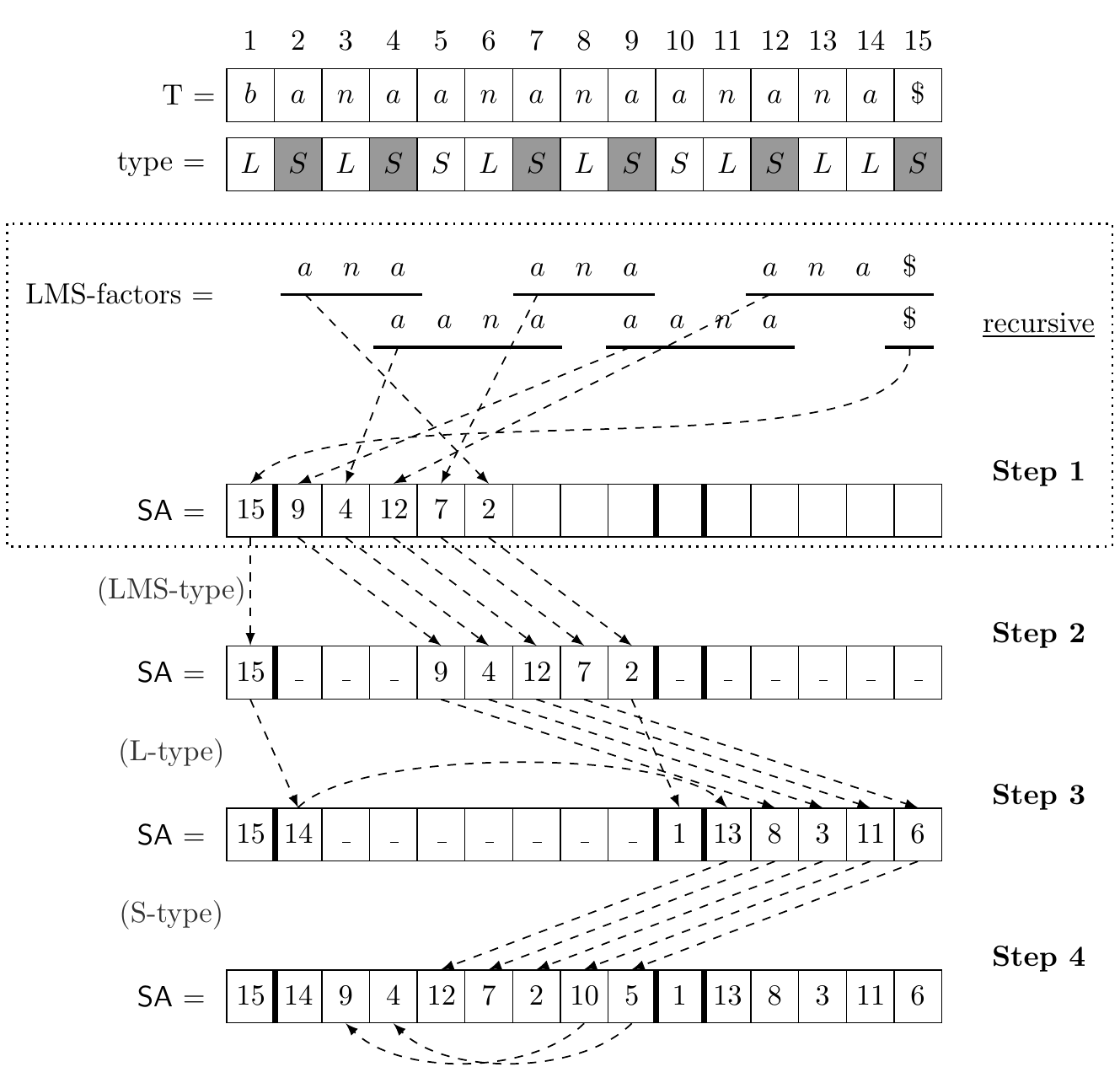}
    \caption{Induced suffix sorting steps (\sacak) for $T=banaananaanana\$$}
    \label{f:sacak}
\end{figure}

Given all LMS-type suffixes of $T[1,n]$, the suffix array can be computed as follows:

\begin{framed}{Steps:}
\begin{enumerate}
    \item Sort all LMS-type suffixes recursively into $\SA^1$, stored in $\SA[1,n/2]$.
    \item Scan $\SA^1$ from right-to-left, and insert the LMS-suffixes into the tail of their corresponding $c$-buckets in \SA.
    \item Induce L-type suffixes by scanning \SA left-to-right: for each suffix $\SA[i]$, if $T_{\SA[i]-1}$ is L-type, insert $\SA[i]-1$ into the head of its bucket.
    \item Induce S-type suffixes by scanning \SA right-to-left: for each suffix $\SA[i]$, if $T_{\SA[i]-1}$ is S-type, insert $\SA[i]-1$ into the tail of its bucket.    
\end{enumerate}
\end{framed}

{Step $1$ considers the string $T^1$ obtained by concatenating the lexicographic names of all the consecutive LMS-factors (each different string is associated with a symbol that represents its lexicographic rank). Note that $T^1$ is defined over an alphabet of size $O(n)$ and that its length is at most $n/2$. The \sacak algorithm is applied recursively to sort the suffixes of $T^1$ into $\SA^1$, which is stored in the first half of $\SA$.} Nong \etal~\cite{Nong2011} showed that sorting the suffixes of $T^1$ is equivalent to sorting the LMS-type suffixes of $T$. We will omit details of this step, since our algorithm will not modify it.

Step $2$ obtains the sorted order of all LMS-type suffixes from $\SA^1$ scanning it from right-to-left and bucket sorting then into the tail of their corresponding $c$-buckets in $\SA$.
Step $3$ induces the order of all L-type suffixes by scanning \SA from left-to-right. Whenever suffix $T_{\SA[i]-1}$ is L-type, $\SA[i]-1$ is inserted in its final (corrected) position in \SA.

Finally, Step $4$ induces the order of all S-type suffixes by scanning \SA from right-to-left. Whenever suffix $T_{\SA[i]-1}$ is S-type, $\SA[i]-1$ is inserted in its final (correct) position in \SA.

\paragraph{Theoretical costs.}
Overall, algorithm \sacak runs in linear time using only an additional array of size $\sigma + O(1)$ words to store the bucket array~\cite{tois/Nong13}.

%%%%%%%%%%%%%%%%%%%%%%%%%%%%%%%%%%%%%%%%%%%%%%%%%%%%%%%%%%%%%%%%%%%%%%%%%%%%%%%%
\section{Inducing the Lyndon array}\label{s:algorithm}

In this section we show how to compute the Lyndon array (\LA) during Step $4$ of algorithm \sacak described in Section~\ref{s:sacak}.
Initially, we set all positions $\LA[i]=0$, for $1\leq i \leq n$. In Step $4$, when \SA is scanned from right-to-left, each value $\SA[i]$, corresponding to $T_{\SA[i]}$, is read in its final (correct) position $i$ in \SA.
In other words, we read the suffixes in decreasing order from $\SA[n], \SA[n-1],\dots, \SA[1]$. We now show how to compute, during iteration $i$, the value of $\LA[\SA[i]]$.

By Lemma~\ref{l:sa_lyndon}, we know that the length of the longest Lyndon factor starting at position $\SA[i]$ in $T$, that is $\LA[\SA[i]]$, is equal to $\ell$, where $T_{\SA[i]+\ell}$ is the next suffix (in text order) that is smaller than $T_{\SA[i]}$.
In this case, $T_{\SA[i]+\ell}$ will be the first suffix in $T_{\SA[i]+1},T_{\SA[i]+2}\dots, T_n$ that has not yet been read in \SA, which means that $T_{\SA[i]+\ell}<T_{\SA[i]}$.
Therefore, during Step $4$, whenever we read $\SA[i]$, 
we compute $\LA[\SA[i]]$
by scanning $\LA[\SA[i]+1,n]$ {to the right} up to the first position $\LA[\SA[i]+\ell]=0$, and we set $\LA[\SA[i]]=\ell$.

The correctness of this procedure follows from the fact that every position in $\LA[1,n]$ is initialized with zero, and if $\LA[\SA[i]+1], \LA[\SA[i]+2], \dots, \LA[\SA[i]+\ell-1]$ are no longer equal to zero, their corresponding suffixes has already been read in positions larger than $i$ in $\SA[i,n]$, and such suffixes are larger (lexicographically) than $T_{\SA[i]}$.
Then, the first position we find $\LA[\SA[i]+\ell]=0$ corresponds to a suffix $T_{\SA[i]+\ell}$ that is smaller than $T_{\SA[i]}$, which was still not read in \SA. 
Also, $T_{\SA[i]+\ell}$ is the next smaller suffix (in text order) because we read $\LA[\SA[i]+1,n]$ from left-to-right.

Figure~\ref{f:algorithm} illustrates iterations $i=15$, $9$, and $3$ of our algorithm for $T=banaananaanana\$$.
For example, at iteration $i=9$, the suffix $T_5$ is read at position $\SA[9]$, and the corresponding value $\LA[5]$ is computed by scanning $\LA[6], \LA[7], \dots, \LA[15]$ up to finding the first empty position, which occurs at $\LA[7=5+2]$. Therefore, $\LA[5]=2$.

\begin{figure}[t!]
    \centering
    \includegraphics[width=.84\textwidth]{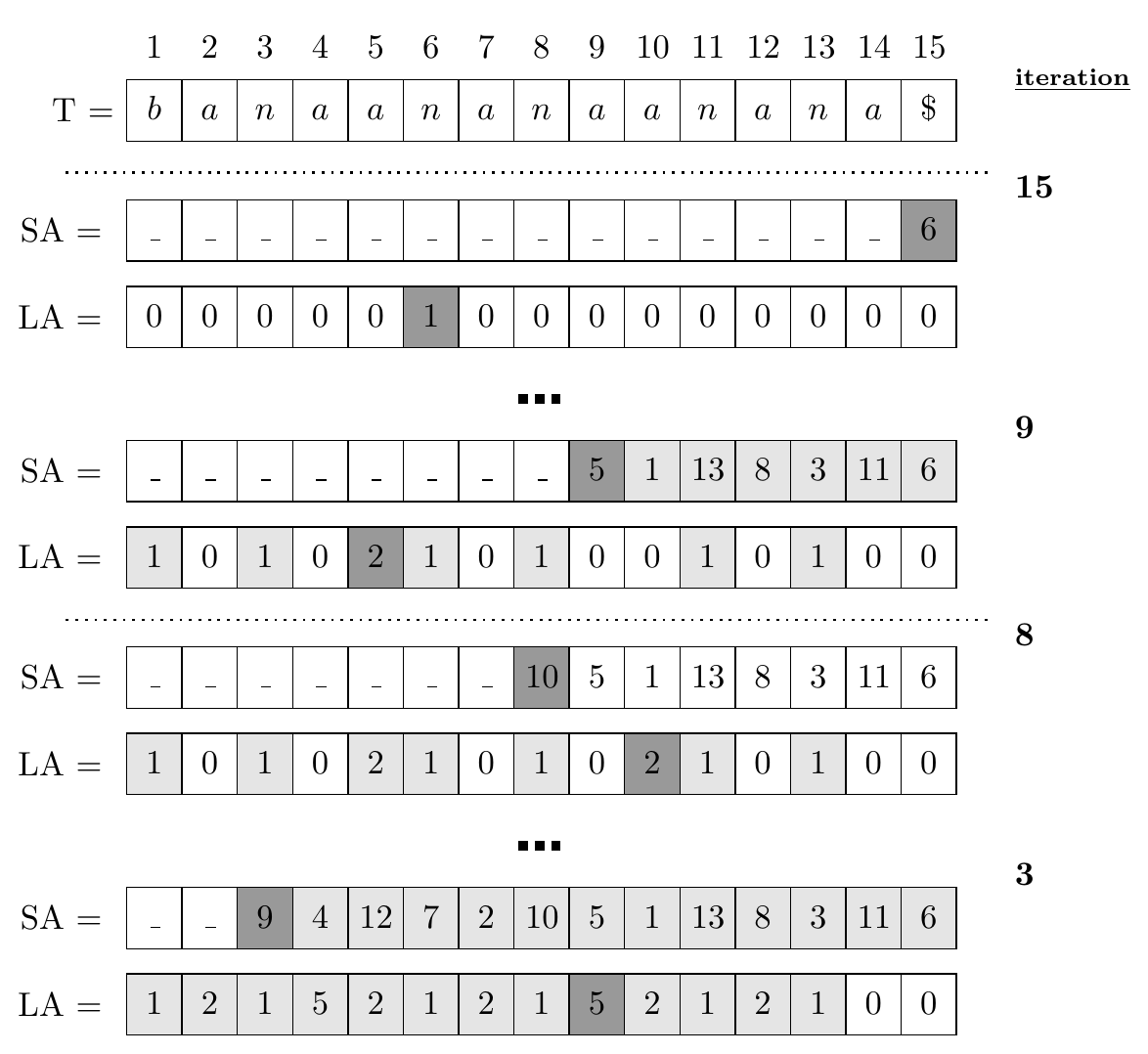}
    \caption{Running example for $T=banaananaanana\$$.}
    \label{f:algorithm}
\end{figure}

At each iteration $i=n,n-1,\dots, 1$, the value of $\LA[\SA[i]]$ is computed in additional $\LA[\SA[i]]$ steps, that is our algorithm adds $O(\LA[i])$ time for each iteration of \sacak.

Therefore, our algorithm runs in $O(n \cdot \avelyn)$ time, where $\avelyn = \sum_{i=1}^{n} \LA[i]/n$. Note that computing \LA does not need extra memory on top of the space for $\LA[1,n]$. Thus, the working space is the same as \sacak, which is $\sigma + O(1)$ words.

\begin{lemma}
The Lyndon array and the suffix array of a string $T[1,n]$ over an alphabet of size $\sigma$ can be computed simultaneously in $O(n \cdot \avelyn)$ time using $\sigma + O(1)$ words of working space,
where $\avelyn$ is equal to the average value in $\LA[1,n]$.\qed
\end{lemma}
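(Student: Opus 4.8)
The plan is to prove the lemma by verifying two claims: (i) the algorithm correctly computes $\LA[1,n]$ together with $\SA[1,n]$, and (ii) its running time and working space match the stated bounds. Since the excerpt already establishes that \sacak computes $\SA$ correctly in $O(n)$ time with $\sigma+O(1)$ words of working space, and that our modification touches only Step~4 without altering any value written to $\SA$, the suffix array part is inherited for free; what remains is to argue the correctness and cost of the $\LA$ computation that rides along inside Step~4.

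For correctness of $\LA$, I would make precise the informal argument given just above the statement. First I would fix the invariant maintained at the start of iteration $i$ of Step~4: the entries $\LA[j]$ that are currently nonzero are exactly those positions $j$ such that $T_j$ has already been placed in $\SA[i+1,n]$, equivalently (by the right-to-left scan in decreasing rank order) exactly those $j$ with $T_j > T_{\SA[i]}$ among all suffixes, and moreover for each such $j$ the value $\LA[j]$ already stored is the correct final Lyndon-array value. The base case $i=n$ is immediate since $T_n=\$$ is the smallest suffix and all $\LA$ entries are $0$. For the inductive step, when $\SA[i]$ is read we scan $\LA[\SA[i]+1], \LA[\SA[i]+2],\dots$ rightward; by the invariant the first index $\SA[i]+\ell$ with $\LA[\SA[i]+\ell]=0$ is the least $j>\SA[i]$ for which $T_j$ has not yet been output, i.e.\ the least $j>\SA[i]$ with $T_j<T_{\SA[i]}$, which is precisely $\NSV_{\ISA}[\SA[i]]$; by Lemma~\ref{l:sa_lyndon} (in its \ISA\ reformulation, equation~\eqref{e:nsv_lyndon}) this gives $\LA[\SA[i]]=\ell$. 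Writing this value re-establishes the invariant for iteration $i-1$. A small point to dispatch: one must confirm that the scan never runs past position $n$ without finding a zero; this holds because at the moment $\SA[i]$ is read, $T_{\SA[i]}$ itself has not been output, so not every suffix to its right can be larger — concretely $T_n=\$<T_{\SA[i]}$ unless $\SA[i]=n$, and the case $\SA[i]=n$ is handled at $i=n$ where $\LA[n]=1$ by the convention $T[n,n]$ being a (degenerate) Lyndon word. Equivalently, $\NSV_{\ISA}[\SA[i]] \le n+1$ always, and when it equals $n+1$ the scan legitimately stops at the sentinel slot.

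For the running time, each iteration $i$ of Step~4 does the usual $O(1)$ \sacak\ work plus a rightward scan of length exactly $\LA[\SA[i]]$ (we stop at the first zero, which by the correctness argument is at offset $\LA[\SA[i]]$). Summing over all iterations, the extra cost is $\sum_{i=1}^{n}\LA[\SA[i]] = \sum_{i=1}^{n}\LA[i] = n\cdot\avelyn$, since $\SA$ is a permutation of $[1,n]$. Adding the $O(n)$ baseline cost of \sacak\ gives $O(n + n\cdot\avelyn)=O(n\cdot\avelyn)$ because $\avelyn\ge 1$ (every $\LA[i]\ge 1$). For the space bound, observe that $\LA[1,n]$ is part of the output, not working space, and the only quantities the modification introduces beyond \sacak's state are $O(1)$ scalars (the loop variable $\ell$, the base pointer $\SA[i]$); the zero-initialisation of $\LA$ writes into the output array itself. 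Hence the working space is unchanged from \sacak, namely $\sigma+O(1)$ words.

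The main obstacle is the correctness invariant, specifically pinning down the claim that ``$\LA[j]\neq 0$ at the start of iteration $i$ iff $T_j$ has already been inserted into $\SA$ during Step~4''. This is subtle because Step~4 inserts a suffix $T_{k-1}$ only when $k-1$ is S-type and $k=\SA[i']$ for some already-processed $i'$; one must be sure that \emph{every} suffix larger than $T_{\SA[i]}$ has indeed been written into $\SA[i+1,n]$ by the time iteration $i$ begins, which is exactly the statement that after Step~4 terminates $\SA$ is fully and correctly sorted — a fact we are entitled to assume from \sacak's correctness, but which must be invoked in the right temporal form (``the prefix $\SA[i+1,n]$ already holds the $n-i$ largest suffixes in sorted order at the start of iteration $i$''). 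Once that is granted, the identification of the first zero with $\NSV_{\ISA}[\SA[i]]$ and the appeal to Lemma~\ref{l:sa_lyndon} are routine. I would also remark, but not belabour, that no write to $\LA$ is ever overwritten (each $\LA[j]$ is set exactly once, at the unique iteration $i$ with $\SA[i]=j$), so the invariant's ``value already stored is final'' clause is trivially preserved.
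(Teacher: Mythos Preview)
Your proposal is correct and follows essentially the same approach as the paper's inline argument preceding the lemma: the same zero/nonzero bookkeeping on $\LA$, the same appeal to Lemma~\ref{l:sa_lyndon}, and the same summation $\sum_i \LA[\SA[i]]=n\cdot\avelyn$ for the time bound, with your version spelling out the invariant and the boundary case more carefully than the paper does. One small terminological slip to tidy: in your final paragraph the invariant should be that $\LA[j]\neq 0$ iff $T_j$ has been \emph{read} at some iteration $i'>i$ of the Step~4 scan, not ``inserted into $\SA$ during Step~4'' (L-type suffixes are read in Step~4 but were placed back in Step~3); your resolution via the temporal form of \sacak's correctness is nonetheless the right one.
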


In the next sections we show how to modify the above algorithm to reduce both its running time and its working space.

\subsection{Reducing the running time to $O(n)$}\label{sec:alt1}

We now show how to modify the above algorithm to compute each \LA entry in constant time. To this end, we store for each position $\LA[i]$ the next smaller position $\ell$ such that $\LA[\ell]=0$. We define two additional {pointer} arrays $\Next[1,n]$ and $\Prev[1,n]$:

\begin{definition}
For $i=1,\ldots,n-1$, $\Next[i] = \min\{\ell|i<\ell\leq n \mbox{ and } \LA[\ell]=0\}$. In addition, we define $\Next[n]=n+1$.
\end{definition}

\begin{definition}
For $i=2,\ldots,n$, $\Prev[i] = \ell$, such that $\Next[\ell]=i$ and $\LA[\ell]=0$. In addition, we define $\Prev[1]=0$.
\end{definition}

The above definitions depend on \LA and therefore $\Next$ and $\Prev$ are updated as we compute additional \LA entries.  
Initially, we set $\Next[i]=i+1$ and $\Prev[i]=i-1$, for $1\leq i \leq n$.
Then, at each iteration $i=n, n-1, \dots, 1$, when we compute $\LA[j]$ with $j=\SA[i]$ setting:
\begin{equation}\label{e:la_next}
  \LA[j] = \Next[j] - j  
\end{equation}
we update the {pointers arrays as follows}:
\begin{align}
\Next[\Prev[j]] & =\Next[j],\quad\mbox{ if }\Prev[j]>0\label{e:next} \\
\Prev[\Next[j]] &= \Prev[j],\quad\mbox{ if }\Next[j]<n+1 \label{e:prev}
\end{align}
The cost of computing each \LA entry is now constant, since only two additional computations (Equations~\ref{e:next} and~\ref{e:prev}) are needed. Because of the use of the arrays \Prev and \Next the working space of our algorithm is now $2n + \sigma + O(1)$ words.

\begin{theorem}
The Lyndon array and the suffix array of a string $T[1,n]$ over an alphabet of size $\sigma$ can be computed simultaneously in  $O(n)$ time using $2n + \sigma + O(1)$ words of working space.\qed
\end{theorem}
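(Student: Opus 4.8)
The plan is to establish the theorem by verifying three things in sequence: correctness of the constant-time \LA computation via equation~\eqref{e:la_next}, correctness of the pointer-array maintenance via equations~\eqref{e:next} and~\eqref{e:prev}, and the $O(n)$ total running time together with the $2n+\sigma+O(1)$ space bound.

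First I would argue that the invariant ``$\Next$ and $\Prev$ correctly encode, with respect to the current state of \LA, the linked list of positions still holding value $0$'' is preserved throughout Step~$4$. Initially every \LA entry is $0$, and setting $\Next[i]=i+1$, $\Prev[i]=i-1$ matches the definitions exactly. Assume the invariant holds at the start of iteration $i$, and let $j=\SA[i]$. By the correctness argument already given for the $O(n\cdot\avelyn)$ algorithm, the true value of $\LA[j]$ equals $\ell$ where $j+\ell$ is the first position to the right of $j$ with $\LA[\cdot]=0$; by the invariant this is exactly $\Next[j]$, so~\eqref{e:la_next} assigns the correct value. Writing $\LA[j]\neq 0$ then removes $j$ from the list of zero-positions, and equations~\eqref{e:next}--\eqref{e:prev} are precisely the standard doubly-linked-list deletion of node $j$, with the guard conditions handling the list endpoints ($\Prev[j]=0$ meaning $j$ is the head, $\Next[j]=n+1$ meaning $j$ is the tail). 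Hence after the update the two arrays again satisfy the definitions with respect to the new \LA, closing the induction. One point to check carefully here is that $j$ is indeed still a zero-position when iteration $i$ begins — this follows because each text position $j$ is the value $\SA[i]$ for exactly one $i$, so $\LA[j]$ is written exactly once, and the correctness paragraph for the basic algorithm already shows the positions $j+1,\dots,j+\ell-1$ are the ones written at larger indices $i'>i$.

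Next, the running time: Step~$4$ of \sacak is unchanged and costs $O(n)$; we have added to each of its $n$ iterations a constant number of array reads and writes (one lookup of $\Next[j]$, one subtraction, and at most two pointer updates), so the total overhead is $O(n)$, giving $O(n)$ overall. For the space: the only additions over \sacak are the two integer arrays $\Next[1,n]$ and $\Prev[1,n]$, each of $n$ words, so the working space becomes $2n+\sigma+O(1)$ words as claimed; the input $T$ ($n\log\sigma$ bits), the output \SA and \LA are not counted as working space, consistent with the convention fixed in Section~\ref{s:background}. Invoking Lemma~\ref{l:sa_lyndon} for the semantic correspondence between $\NSV_{\ISA}$ and \LA completes the argument.

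The main obstacle is the correctness of the linked-list invariant under the peculiar order in which positions are finalized — the iterations process suffixes in \emph{decreasing} lexicographic rank, not in text order, so one must be convinced that at the moment $j=\SA[i]$ is processed, every text position strictly between $j$ and $\Next[j]$ has already been finalized (its suffix is lexicographically larger than $T_j$) while $\Next[j]$ itself has not (its suffix is lexicographically smaller). This is exactly the content of the correctness paragraph preceding the basic $O(n\cdot\avelyn)$ lemma, so the work reduces to checking that replacing the linear rightward scan of $\LA[j+1,n]$ by a single $\Next$-pointer dereference returns the same position — which is immediate from the invariant — and that the deletion equations keep the invariant intact for all subsequent iterations. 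Everything else is routine bookkeeping.
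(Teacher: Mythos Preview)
Your proposal is correct and follows essentially the same approach as the paper: maintaining \Next and \Prev as a doubly-linked list of the still-zero positions of \LA, deleting node $j=\SA[i]$ at each iteration via~\eqref{e:next}--\eqref{e:prev}, and reading off $\LA[j]$ in $O(1)$ via~\eqref{e:la_next}. The paper's own argument is considerably terser---it simply states the update equations and asserts constant cost per iteration---so your explicit invariant and its inductive preservation add rigor but no new ideas.
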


\subsection{Getting rid of a pointer array}\label{s:alt_2}

We now show how to reduce the working space of Section~\ref{sec:alt1} by storing only one array, say $\A[1,n]$, keeping $\Next/\Prev$ information together.
In a glace, we store \Next initially into the space of $\A[1,n]$, then we reuse $\A[1,n]$ to store the (useful) entries of \Prev.

Note that, whenever we write $\LA[j]=\ell$, the value in $\A[j]$, that is $\Next[j]$ is no more used by the algorithm.
Then, we can reuse $\A[j]$ to store $\Prev[j+1]$. 
Also, we know that if $\LA[j]=0$ then $\Prev[j+1]=j$.
Therefore, we can redefine \Prev in terms of \A:

\begin{equation}\label{e:a_prev}
\Prev[j]=
\begin{cases}
j-1   & \mbox{ if } \LA[j-1]=0 \\
\A[j-1] & \mbox{ otherwise}.
\end{cases}
\end{equation}

The running time of our algorithm remains the same since we have added only one extra verification to obtain $\Prev[j]$ (Equation~\ref{e:a_prev}).
Observe that whenever $\Next[j]$ is overwritten the algorithm does not need it anymore.
The working space is therefore reduced to $n + \sigma + O(1)$ words.

\begin{theorem}
The Lyndon array and the suffix array of a string $T[1,n]$ over an alphabet of size $\sigma$ can be computed simultaneously in  $O(n)$ time using $n + \sigma + O(1)$ words of working space.\qed
\end{theorem}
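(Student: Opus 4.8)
\noindent\emph{Proof idea.}
The plan is to build on the $O(n)$-time algorithm of the previous theorem, which threads a doubly-linked list through the positions $k$ with $\LA[k]=0$ by means of the arrays $\Next$ and $\Prev$, and to show that a single array $\A[1,n]$ can play the role of both. Throughout Step~$4$ I would maintain the encoding invariant that $\A[k]=\Next[k]$ whenever $\LA[k]=0$ and $\A[k]=\Prev[k+1]$ whenever $\LA[k]\neq 0$ (the cell $\A[n]$ becoming irrelevant once $\LA[n]$ is set), together with the standard list invariants $\Next[\Prev[k]]=k$ and $\Prev[\Next[k]]=k$ for every zero-position $k$. The first thing to check is that this is consistent with the initialisation $\A[i]=i+1$, which holds because initially every $\LA[i]=0$ and $\Next[i]=i+1$.

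Next I would argue that every quantity the algorithm reads remains recoverable. The value $\Next[j]$ is consulted only for the position $j=\SA[i]$ currently being processed, at which point $\LA[j]=0$ and hence $\Next[j]=\A[j]$, so Equation~\eqref{e:la_next} is unchanged. The value $\Prev[j]$ is consulted only for the same $j$: if $\LA[j-1]=0$ then $j-1$ and $j$ are consecutive zero-positions, so $\Prev[j]=j-1$; if $\LA[j-1]\neq 0$ the invariant gives $\Prev[j]=\A[j-1]$. This is precisely Equation~\eqref{e:a_prev}, and it costs one extra branch, so the per-iteration cost stays $O(1)$ and the total time stays $O(n)$.

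The step I expect to be the main obstacle is checking that the three writes performed when processing $j$ --- assigning $\LA[j]$, then the updates~\eqref{e:next} and~\eqref{e:prev} --- translate into writes on $\A$ that preserve the invariant and clobber nothing still needed. Making $\LA[j]\neq 0$ retires the meaning ``$\A[j]=\Next[j]$''; this is harmless because $\Next[\cdot]$ and $\Prev[\cdot]$ are only ever read at the index of the position being processed, and $j$ is processed exactly once, so the old value $\Next[j]$ is never consulted again and the cell may be reused for $\Prev[j+1]$. The update $\Next[\Prev[j]]=\Next[j]$ writes to $\A[\Prev[j]]$, still a zero-position, which correctly receives its new $\Next$ value. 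The delicate one is $\Prev[\Next[j]]=\Prev[j]$: if $\Next[j]=j+1$ it writes into the cell $\A[j]$ just freed, which must now legitimately hold $\Prev[j+1]=\Prev[j]$; if $\Next[j]>j+1$ it writes into $\A[\Next[j]-1]$, a non-zero position whose stored value is exactly the $\Prev[\Next[j]]$ that has to be refreshed. One also checks that $\Prev[j]<j$ rules out a collision between these two writes. This boundary case, where a single cell flips its role within one iteration, is where the bookkeeping must be done carefully; the rest is routine.

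Finally I would observe that, since \sacak's construction of $\SA$ in Step~$4$ is only augmented by reads and writes on $\A$ and $\LA$, the suffix array is still produced exactly as before in $O(n)$ time, while $\LA$ is filled with $O(1)$ extra work per iteration. The only auxiliary structures are the bucket array $\C[1,\sigma]$ and the array $\A[1,n]$, for a working space of $n+\sigma+O(1)$ words, as claimed.
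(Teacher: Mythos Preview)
Your approach is exactly the paper's: overlay $\Next$ and $\Prev$ in a single array $\A$, with $\A[k]=\Next[k]$ while $\LA[k]=0$ and $\A[k]$ repurposed to hold $\Prev[k+1]$ afterwards, recovering $\Prev[j]$ via Equation~\eqref{e:a_prev}; your write-collision analysis is in fact more detailed than what the paper spells out. One small imprecision: the invariant ``$\A[k]=\Prev[k+1]$ whenever $\LA[k]\neq 0$'' is stronger than what is actually maintained---when you process $j$ with $\Next[j]>j+1$ the cell $\A[j]$ is never refreshed---so the invariant you really preserve (and the only one you need) is that $\A[k]=\Prev[k+1]$ whenever $\LA[k]\neq 0$ \emph{and} $\LA[k+1]=0$; this suffices because $\Prev[j]$ is only read for zero-positions $j$.
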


\subsection{Getting rid of both pointer arrays}\label{s:alt_3}

Finally, we show how to use the space of $\LA[1,n]$ to store both the auxiliary array $\A[1,n]$ and the final values of \LA. First we observe that it is easy to compute $\LA[i]$ when $T_i$ is an L-type suffix.

\begin{lemma}
$\LA[j]=1$ iff $T_{j}$ is an L-type suffix, or $i=n$.
\end{lemma}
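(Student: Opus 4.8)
The plan is to show both directions of the biconditional, together with the separate case $i=n$. First I would recall from the definition of the types that $T_j$ is L-type precisely when $T_j > T_{j+1}$, equivalently (since the two suffixes differ in their first symbol or the first is a prefix of the second, but $T$ is primitive so equality is impossible) when $\ISA[j] > \ISA[j+1]$. Now apply the reformulation of Lemma~\ref{l:sa_lyndon} in terms of \ISA: $\LA[j]=\ell$ iff $\ISA[j]<\ISA[j+k]$ for $1\le k<\ell$ and $\ISA[j]>\ISA[j+\ell]$. Taking $\ell=1$, the condition ``$\ISA[j]<\ISA[j+k]$ for $1\le k<\ell$'' is vacuous, so $\LA[j]=1$ iff $\ISA[j]>\ISA[j+1]$, which is exactly the statement that $T_j$ is L-type. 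That handles $1\le j\le n-1$.

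For the endpoint, when $j=n$ we have $T_n=\$$, and the only Lyndon factor of $T$ starting at position $n$ is the single symbol $\$$ itself (a length-one factor is always a Lyndon word, and there is no longer factor available), so $\LA[n]=1$. Note that $T_n$ is S-type by definition, so the ``or $i=n$'' clause is genuinely needed: at $j=n$ we get $\LA[j]=1$ even though $T_j$ is not L-type. Conversely, for $j<n$, if $\LA[j]=1$ then by the argument above $T_j$ is L-type, and if $\LA[j]\ge 2$ then $\ISA[j]<\ISA[j+1]$, i.e. $T_j$ is S-type, so $\LA[j]=1$ fails; this gives the ``only if'' direction for $j<n$.

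I do not expect any serious obstacle here: the statement is essentially an unwinding of Lemma~\ref{l:sa_lyndon} specialized to $\ell=1$. The one point that needs a word of care is the use of primitivity of $T=T[1]\dots T[n-1]\$$ (noted in the Background section) to rule out $T_j = T_{j+1}$ and hence to guarantee that the lexicographic comparison between consecutive suffixes is strict in one direction or the other, so that ``not S-type'' really is equivalent to ``L-type''. I would state that comparison explicitly, and also make clear that the single-symbol factor $T[j,j]$ is always a Lyndon word, which is why $\LA[j]\ge 1$ always and the claim is about when this trivial lower bound is tight.
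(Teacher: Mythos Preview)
Your proposal is correct and follows essentially the same idea as the paper. The paper's proof is in fact much terser: it only writes the ``if'' direction (``If $T_j$ is an L-type suffix, then $T_j>T_{j+1}$ and $\LA[j]=1$. By definition $\LA[n]=1$.'') and leaves the converse implicit, whereas you spell out both directions via the $\ISA$ reformulation of Lemma~\ref{l:sa_lyndon}; your treatment is the more complete one. One small remark: you do not need primitivity to rule out $T_j=T_{j+1}$ --- distinct-length suffixes ending in the unique sentinel $\$$ are automatically distinct --- but this does not affect correctness.
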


\begin{proof}
If $T_{j}$ is an L-type suffix, then $T_{j}>T_{j+1}$ and $\LA[j]=1$.
By definition $\LA[n]=1$.\qed
\end{proof}

Notice that at Step 4 during iteration $i=n,n-1, \dots, 1$, whenever we read an S-type suffix $T_{j}$, with $j=\SA[i]$, its succeeding suffix (in text order) $T_{j+1}$ has already been read in some position in the interval $\SA[i+1,n]$ ($T_{j+1}$ have induced the order of $T_{j}$).
Therefore, the \LA-entries corresponding to S-type suffixes are always inserted on the left of a block (possibly of size one) of non-zero entries in $\LA[1,n]$.

Moreover, whenever we are computing $\LA[j]$ and we have $\Next[j]=j+k$ (stored in $\A[j]$), we know the following entries $\LA[j+1], \LA[j+2],\dots,\LA[j+k-1]$ are no longer zero, and we have to update 
$\A[j+k-1]$, corresponding to $\Prev[j+k]$ (Equation~\ref{e:a_prev}).  
In other words, we update \Prev information only for right-most entry of each block of non empty entries, which corresponds to a position of an L-type suffix because S-type are always inserted on the left of a block.

Then, at the end of the modified Step 4, if $\A[i]<i$ then $T_i$ is an L-type suffix, and we know that $\LA[i]=1$.
On the other hand, the values with $\A[i]>i$ remain equal to $\Next[i]$ at the end of the algorithm.
And we can use them to compute $\LA[i]=\A[i]-i$ (Equation~\ref{e:la_next}).

Thus, after the completion of Step 4, 
we sequentially scan $\A[1,n]$ overwriting its values with \LA as follows:
\begin{equation}\label{e:a_la}
\LA[j]=
\begin{cases}
1   & \mbox{ if } \A[j]<j \\
\A[j]-j & \mbox{ otherwise}.
\end{cases}
\end{equation}

The running time of our algorithm is still linear, since we added only a linear scan over $\A[1,n]$ {at the end of Step 4}. On the other hand, the working space is reduced to $\sigma + O(1)$ words, since we need to store only the bucket array $\C[1,\sigma]$.

\begin{theorem}\label{t:main_result}
The Lyndon array and the suffix array of a string of length $n$ over an alphabet of size $\sigma$ can be computed simultaneously in $O(n)$ time using $\sigma + O(1)$ words of working space.\qed
\end{theorem}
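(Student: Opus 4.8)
The plan is to assemble the theorem from the three successive refinements of Sections~\ref{sec:alt1}--\ref{s:alt_3}, each of which preserves the \sacak skeleton (Steps~1--3 and the right-to-left scan of Step~4) and only augments Step~4. First I would record the invariant underlying the naive variant: by the \ISA-reformulation of Lemma~\ref{l:sa_lyndon} we have $\LA[j]=\NSVa{\ISA}[j]-j$, and since Step~4 reads suffixes in decreasing \ISA-order, at the iteration $i$ with $j=\SA[i]$ the value $\NSVa{\ISA}[j]$ is precisely the smallest index $>j$ whose suffix has not yet been read, i.e.\ the smallest $\ell>j$ with $\LA[\ell]=0$. With \Next and \Prev maintained as a doubly linked list over the still-zero positions of \LA, this gives $\Next[j]=\NSVa{\ISA}[j]$ at the moment $j$ is processed, so \eqref{e:la_next} is correct and \eqref{e:next}--\eqref{e:prev} are just the constant-time deletion of $j$ from that list. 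I would then fold \Prev into the array of \Next: once $\LA[j]$ is written the cell $\A[j]=\Next[j]$ is never read again, so it may be reused to hold $\Prev[j+1]$, and since $\LA[j-1]=0$ forces $\Prev[j]=j-1$, the case split \eqref{e:a_prev} recovers \Prev from \A and \LA with no loss.

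The substantive step, and the one I expect to be the main obstacle, is the in-place collapse of Section~\ref{s:alt_3}. The key structural fact to prove is: when an S-type suffix $T_j$ with $j=\SA[i]$ is read in Step~4, its successor $T_{j+1}$ has already induced it and hence has already been read, so $\LA[j+1]\neq 0$ at that instant; therefore every S-type entry is written at the left end of a maximal run of non-zero entries of \LA, and dually the right end of every such run --- except the position $n$ --- is an L-type position. From this I would conclude that the only \A-cells ever overwritten by the \Prev-update, namely $\A[\Next[j]-1]$, sit at L-type positions, whose final \LA-value is known to be $1$ and which may therefore be clobbered; and a short processing-order argument (an L-type index $i$ has $\ISA[i]>\ISA[i+1]$, so $i$ is processed before $i+1$ and its cell ends up $<i$, whereas an S-type index is processed after $i+1$, so its cell is never overwritten and retains $\Next[i]>i$, with $\A[n]=n+1>n$ consistent with $\LA[n]=1$) shows that at termination $\A[i]<i$ holds exactly at the L-type positions. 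Hence a single left-to-right pass applying \eqref{e:a_la} rewrites \A in place as \LA. The delicate part throughout is precisely this ``no cell is overwritten before its last use'' bookkeeping --- both when aliasing \Prev onto \Next's array and when collapsing that array onto \LA --- which I would make rigorous by tracking the processing order of Step~4.

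Finally, for the resource bounds: Steps~1--3 of \sacak are unchanged and run in $O(n)$ time; the augmented Step~4 does $O(1)$ extra work per position (a constant number of cell updates, \eqref{e:la_next}--\eqref{e:a_prev}) plus one closing $O(n)$ scan for \eqref{e:a_la}, so the total time is $O(n)$; and the only storage used beyond the input $T$ and the output arrays \SA and \LA is the bucket array $\C[1,\sigma]$, that is, $\sigma+O(1)$ words of working space. This gives the claimed bounds.
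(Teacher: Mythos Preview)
Your proposal is correct and follows essentially the same route as the paper: the theorem in the paper carries no separate proof beyond the algorithmic development of Sections~\ref{sec:alt1}--\ref{s:alt_3}, and you reconstruct exactly that chain of refinements (doubly-linked list, folding \Prev onto \Next's array via~\eqref{e:a_prev}, then the in-place collapse justified by the S-type-on-the-left observation and the final scan~\eqref{e:a_la}). If anything, your processing-order argument for the terminal characterization $\A[i]<i \Leftrightarrow T_i$ is L-type is more explicit than what the paper writes; the paper asserts this from the block-structure observation without spelling out why an S-type cell is never subsequently clobbered.
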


Note that the bounds on the working space given in the above theorems assume that the output consists of \SA and \LA. If one is interested in \LA only, then the working space of the algorithm is $n + \sigma + O(1)$ words which is still smaller that the working space of the other linear time algorithms that we discussed in Section~\ref{s:background}.

%%%%%%%%%%%%%%%%%%%%%%%%%%%%%%%%%%%%%%%%%%%%%%%%%%%%%%%%%%%%%%%%%%%%%%%%%%%%%%%%
\section{Experiments}\label{s:experiments}

\begin{table}[t]
\centering
\setlength{\tabcolsep}{5.4pt} % Default value: 6pt
\renewcommand{\arraystretch}{1.2} % Default value: 1
\begin{tabular}{lrr|ccc||c|ccc||c}
\toprule
    &  & & \multicolumn{3}{c||}{\LA}   & \multicolumn{4}{c||}{\LA and \SA}                    & \SA     \\ 
dataset    & $\sigma$   & $n/2^{20}$   & \multicolumn{1}{c}{\rotatebox{90}{\nsv~\cite{Hohlweg2003}}}      & \multicolumn{1}{c}{\rotatebox{90}{\baier~\cite{Baier2016,FranekPS17}}}      & \multicolumn{1}{c||}{\rotatebox{90}{\louza~\cite{jda/LouzaSMT18}}}      & \multicolumn{1}{c|}{\rotatebox{90}{\baiersa~\cite{Baier2016,FranekPS17}}}      &  \multicolumn{1}{c}{\rotatebox{90}{\ourB}}        & \multicolumn{1}{c}{\rotatebox{90}{\ourC}}      & \rotatebox{90}{\ourD}      & \rotatebox{90}{\sacak~\cite{tois/Nong13}}      \\ \hline
\texttt{pitches}     & 133 & 53    & \tbf{0.15}  & 0.20       & 0.20 & 0.26        &  0.26 & 0.22 & \tbf{0.18} & 0.13 \\
\texttt{sources}     & 230 & 201   & \tbf{0.26}  & 0.28       & 0.32 & 0.37        &  0.46 & 0.41 & \tbf{0.34} & 0.24 \\
\texttt{xml}         & 97  & 282   & \tbf{0.29}  & 0.31       & 0.35 & 0.42        &  0.52 & 0.47 & \tbf{0.38} & 0.27 \\
\texttt{dna}         & 16  & 385   & 0.39        & \tbf{0.28} & 0.49 & \tbf{0.43}  &  0.69 & 0.60 & 0.52 & 0.36 \\
\texttt{english.1GB} & 239 & 1,047 & 0.46        & \tbf{0.39} & 0.56 & \tbf{0.57}  &  0.84 & 0.74 & 0.60 & 0.42 \\
\texttt{proteins}    & 27  & 1,129 & 0.44        & \tbf{0.40} & 0.53 & 0.66        &  0.89 & 0.69 & \tbf{0.58} & 0.40 \\ \hdashline
\texttt{einstein-de} & 117 & 88    & 0.34        & \tbf{0.28} & 0.38 & \tbf{0.39}  &  0.57 & 0.54 & 0.44 & 0.31 \\
\texttt{kernel}      & 160 & 246   & 0.29        & \tbf{0.29} & 0.39 & \tbf{0.38}  &  0.53 & 0.47 & \tbf{0.38} & 0.26 \\
\texttt{fib41}       & 2   & 256   & 0.34        & \tbf{0.07} & 0.45 & \tbf{0.18}  &  0.66 & 0.57 & 0.46 & 0.32 \\
\texttt{cere}        & 5   & 440   & 0.27        & \tbf{0.09} & 0.33 & \tbf{0.17}  &  0.43 & 0.41 & 0.35 & 0.25 \\ \hdashline
\texttt{bbba}         & 2   & 100   & 0.04        & \tbf{0.02} & 0.05 & \tbf{0.03}  &  0.05 & 0.04 & \tbf{0.03} & 0.03 \\ \hline
\end{tabular}
\caption{
Running time ($\mu$s/input byte).
}
\label{t:time}
\end{table}

\begin{table}[t]
\centering
\setlength{\tabcolsep}{8pt} % Default value: 6pt
\renewcommand{\arraystretch}{1.2} % Default value: 1
\begin{tabular}{lrr|ccc||c|ccc||c}
\toprule
    &  & & \multicolumn{3}{c||}{\LA}   & \multicolumn{4}{c||}{\LA and \SA}                    & \SA     \\ %\cline{3-11} 
dataset    & $\sigma$   & $n/2^{20}$   & \multicolumn{1}{c}{\rotatebox{90}{\nsv~\cite{Hohlweg2003}}}      & \multicolumn{1}{c}{\rotatebox{90}{\baier~\cite{Baier2016,FranekPS17}}}      & \multicolumn{1}{c||}{\rotatebox{90}{\louza~\cite{jda/LouzaSMT18}}}      & \multicolumn{1}{c|}{\rotatebox{90}{\baiersa~\cite{Baier2016,FranekPS17}}}      &  \multicolumn{1}{c}{\rotatebox{90}{\ourB}}        & \multicolumn{1}{c}{\rotatebox{90}{\ourC}}      & \rotatebox{90}{\ourD}      & \rotatebox{90}{\sacak~\cite{tois/Nong13}}      \\ \hline
\texttt{pitches}     & 133 & 53    & \tbf{9}  & 17       & \tbf{9}  & 17 & 17 & 13 & \tbf{9} & 5    \\
\texttt{sources}     & 230 & 201   & \tbf{9}  & 17       & \tbf{9}  & 17 & 17 & 13 & \tbf{9} & 5 \\
\texttt{xml}         & 97  & 282   & \tbf{9}  & 17       & \tbf{9}  & 17 & 17 & 13 & \tbf{9} & 5 \\
\texttt{dna}         & 16  & 385   & \tbf{9}  & 17       & \tbf{9}  & 17 & 17 & 13 & \tbf{9} & 5 \\
\texttt{english.1GB} & 239 & 1,047 & \tbf{9}  & 17       & \tbf{9}  & 17 & 17 & 13 & \tbf{9} & 5 \\
\texttt{proteins}    & 27  & 1,129 & \tbf{9}  & 17       & \tbf{9}  & 17 & 17 & 13 & \tbf{9} & 5 \\ \hdashline
\texttt{einstein-de} & 117 & 88    & \tbf{9}  & 17       & \tbf{9}  & 17 & 17 & 13 & \tbf{9} & 5 \\
\texttt{kernel}      & 160 & 246   & \tbf{9}  & 17       & \tbf{9}  & 17 & 17 & 13 & \tbf{9} & 5 \\
\texttt{fib41}       & 2   & 256   & \tbf{9}  & 17       & \tbf{9}  & 17 & 17 & 13 & \tbf{9} & 5 \\
\texttt{cere}        & 5   & 440   & \tbf{9}  & 17       & \tbf{9}  & 17 & 17 & 13 & \tbf{9} & 5 \\ \hdashline
\texttt{bbba}         & 2   & 100   & \tbf{13} & {17} & {17} & 17 & 17 & 13 & \tbf{9} & 5 \\ \hline
\end{tabular}
\caption{
Peak space (bytes/input size).
}
\label{t:peakspace}
\end{table}

\ignore{
\begin{table}[t]
\centering
\setlength{\tabcolsep}{2.6pt} % Default value: 6pt
\renewcommand{\arraystretch}{1.2} % Default value: 1
\begin{tabular}{lrr|rrr||r|rrc||c}
\toprule
    &  & & \multicolumn{3}{c||}{\LA}   & \multicolumn{4}{c||}{\LA and \SA}                    & \SA     \\ %\cline{3-11} 
dataset    & $\sigma$   & $n/2^{20}$   & \multicolumn{1}{c}{\rotatebox{90}{\nsv~\cite{Hohlweg2003}}}      & \multicolumn{1}{c}{\rotatebox{90}{\baier~\cite{Baier2016,FranekPS17}}}      & \multicolumn{1}{c||}{\rotatebox{90}{\louza~\cite{jda/LouzaSMT18}}}      & \multicolumn{1}{c|}{\rotatebox{90}{\baiersa~\cite{Baier2016,FranekPS17}}}      &  \multicolumn{1}{c}{\rotatebox{90}{\ourB}}        & \multicolumn{1}{c}{\rotatebox{90}{\ourC}}      & \rotatebox{90}{\ourD}      & \rotatebox{90}{\sacak~\cite{tois/Nong13}}      \\ \hline
\texttt{pitches}     & 133 & 53    & \tbf{213.0}   & 639.0     & \tbf{213.0}    & 426.0   &  426.0   & 213.0   & \tbf{1} & 1\\ 
\texttt{sources}     & 230 & 201   & \tbf{804.4}   & 2,413.2   & \tbf{804.4}    & 1,608.8 &  1,608.8 & 804.4   & \tbf{1} & 1 \\
\texttt{xml}         & 97  & 282   & \tbf{1,129.7} & 3,389.0   & \tbf{1,129.7}  & 2,259.3 &  2,259.3 & 1,129.7 & \tbf{1} & 1 \\
\texttt{dna}         & 16  & 385   & \tbf{1,540.9} & 4,622.6   & \tbf{1,540.9}  & 3,081.7 &  3,081.7 & 1,540.9 & \tbf{1} & 1 \\
\texttt{english.1GB} & 239 & 1,047 & \tbf{4,187.4} & 12,562.2  & \tbf{4,187.4}  & 8,374.8 &  8,374.8 & 4,187.4 & \tbf{1} & 1 \\
\texttt{proteins}    & 27  & 1,129 & \tbf{4,516.8} & 13,550.4  & \tbf{4,516.8}  & 9,033.6 &  9,033.6 & 4,516.8 & \tbf{1} & 1 \\ \hdashline
\texttt{einstein-de} & 117 & 88    & \tbf{353.9}   & 1,061.5   & \tbf{353.9}    & 707.7   &  707.7   & 353.8   & \tbf{1} & 1 \\
\texttt{kernel}      & 160 & 246   & \tbf{984.1}   & 2,952.1   & \tbf{984.1}    & 1,968.1 &  1,968.1 & 984.0   & \tbf{1} & 1 \\
\texttt{fib41}       & 2   & 256   & \tbf{1,022.0} & 3,066.0   & \tbf{1,022.0}  & 2,044.0 &  2,044.0 & 1,022.0 & \tbf{1} & 1 \\
\texttt{cere}        & 5   & 440   & \tbf{1,759.9} & 5,279.0   & \tbf{1,759.9}  & 3,519.3 &  3,519.3 & 1,759.7 & \tbf{1} & 1 \\ \hdashline
\texttt{bbba}         & 2   & 100   & 1,200.0       & 1,200.0   & 1,200.0        & 800.0   &  800.0   & 400.0   & \tbf{1} & 1 \\ \hline
\end{tabular}
\caption{
Working space in MB, except for \ourD and \sacak, which used 1 KB.
}
\label{t:workspace}
\end{table}}

We compared the performance of our algorithm, {called \ourA,} with algorithms to compute \LA in linear time 
by Franek \etal~\cite{Franek2016,Hohlweg2003} (\nsv), 
Baier~\cite{Baier2016,FranekPS17} (\baier), and
Louza \etal~\cite{jda/LouzaSMT18} (\louza).
We also compared a version of Baier's algorithm that computes \LA and \SA together (\baiersa). We considered the three linear time alternatives of our algorithm described in Sections~\ref{sec:alt1}--\ref{s:alt_3}. We tested all three versions since one could be interested in the fastest algorithm regardless of the space usage. We used four bytes for each computer word so the total space usage of our algorithms was respectively $17n$, $13n$ and $9n$ bytes. We included also the performance of \sacak~\cite{tois/Nong13} to evaluate the overhead added by the computation of \LA in addition to the \SA.

The experiments were conducted on a machine with an \texttt{Intel Xeon} Processor \texttt{E5-2630} v3 20M Cache 2.40-GHz, 384 GB of internal memory and a 13 TB SATA storage, under a 64 bits \texttt{Debian GNU/Linux 8} (kernel 3.16.0-4) OS.
We implemented our algorithms in ANSI C.
The time was measured with \texttt{clock()} function of C standard libraries and the memory was measured using \texttt{malloc\_count} library\footnote{\url{https://github.com/bingmann/malloc\_count}}.
The source-code is publicly available at \url{https://github.com/felipelouza/lyndon-array/}.

We used string collections from the Pizza~\&~Chili dataset\footnote{\url{http://pizzachili.dcc.uchile.cl/texts.html}}.
In particular, the datasets \texttt{einstein-de}, \texttt{kernel}, \texttt{fib41} and \texttt{cere} are highly repetitive texts\footnote{\url{http://pizzachili.dcc.uchile.cl/repcorpus.html}}, and
the \texttt{english.1G} is the first 1GB of the original \texttt{english} dataset. We also created an artificial repetitive dataset, called \texttt{bbba}, consisting of a string $T$ with $100\times2^{20}$ copies of $b$ followed by one occurrence of $a$, that is, $T=b^{n-2}a\$$. This dataset represents a worst-case input for the algorithms that use a stack (\nsv and \louza).

Table~\ref{t:time} shows the running time of each algorithm in $\mu$s/input byte.
The results show that our algorithm is competitive in practice. 
In particular, the version \ourD was only about $1.35$ times slower than the fastest algorithm (\baier) for non-repetitive datasets, and $2.92$ times slower for repetitive datasets.
Also, the performance of \ourD and \baiersa were very similar. Finally, the overhead of computing \LA in addition to \SA was small: \ourD was $1.42$ times slower than \sacak, whereas \baiersa was $1.55$ times slower than \baier, on average. Note that \ourD was consistently faster than \ourC and \ourB, so using more space does not yield any advantage. 

Table~\ref{t:peakspace} shows the peak space consumed by each algorithm given in bytes per input symbol.
The smallest values were obtained by \nsv, \louza and \ourD. 
In details, the space used by \nsv and \louza was $9n$ bytes plus the space used by the stack. 
The stack space was negligible (about 10KB) for almost all datasets, except for \tbf{bbba} where the stack used $4n$ bytes for \nsv and $8n$ bytes for \louza 
(the number of stack  entries is the same, but each stack entry consists of a pair of integers).
On the other hand, our algorithm, \ourD, used exactly $9n+1024$ bytes for all datasets.

\ignore{Table~\ref{t:workspace} shows the working space of each algorithm in MB, except for algorithms 
\ourD and \sacak, whose working space was exactly equal to $1$ KB.
The working space was obtained by subtracting from the total space the space used by the input string $T$ and the output. We considered three different scenarios. First, we considered only \LA as output, then we considered \LA and \SA, and finally only \SA.
The smallest values were given by \ourD, which was the only solution that kept the working space constant.
The additional space used \nsv and \louza were equal to the version \ourC, except for \tbf{bbba}.
In particular, for dataset \texttt{bbba}, we can see that solutions that compute \LA and \SA use less additional space than solutions that obtain only \LA.}

\section{Conclusions}

{We have introduced an algorithm for computing simultaneously the suffix array and Lyndon array (\LA) of a text using induced suffix sorting. 
The most space-economical variant of our algorithm uses only $n + \sigma + O(1)$ words of working space making it the most space economical \LA algorithm among the ones running in linear time; this includes both the algorithm computing the \SA and \LA and the ones computing only the \LA. The experiments have shown our algorithm is only slightly slower than the available alternatives, and 
that computing the \SA is usually the most expensive step of all linear time \LA construction algorithms.  
A natural open problem is to devise a linear time algorithm to construct only the \LA using $o(n)$ words of working space.}

%%%%%%%%%%%%%%%%%%%%%%%%%%%%%%%%%%%%%%%%%%%%%%%%%%%%%%%%%%%%%%%%%%%%%%%%%%%%%%%%
\section*{Acknowledgments}

The authors thank Uwe Baier for kindly providing the source codes of algorithms \baier and \baiersa, and Prof. Nalvo Almeida for granting access to the machine used for the experiments.

\paragraph{Funding:}
F.A.L. was supported by the grant $\#$2017/09105-0 from the S\~ao Paulo Research Foundation (FAPESP).
G.M. was partially supported by PRIN grant 2017WR7SHH, by INdAM-GNCS Project 2019 {\sl Innovative methods for the solution of medical and biological big data} and by the  LSBC\_19-21 Project from the University of Eastern Piedmont.
S.M. and M.S. are partially supported by MIUR-SIR project CMACBioSeq {\sl Combinatorial methods for analysis and compression of biological sequences} grant n.~RBSI146R5L.
G.P.T. acknowledges the support of Brazilian agencies Conselho Nacional de Desenvolvimento Científico e Tecnológico (CNPq) and Coordenação de Aperfeiçoamento de Pessoal de Nível Superior (CAPES).

% ---- Bibliography ----
%
% BibTeX users should specify bibliography style 'splncs04'.
% References will then be sorted and formatted in the correct style.
%

%\bibliographystyle{splncs04}
%\bibliography{references}

\end{document}